\documentclass[11pt,a4paper]{article}
\usepackage[utf8]{inputenc}
\usepackage[english]{babel}
\usepackage{amsmath,amssymb,amsfonts,mathrsfs}
\usepackage{physics}
\usepackage{amsthm}
\usepackage{geometry}
\usepackage[colorlinks=true, citecolor=blue, linkcolor=blue, urlcolor=blue]{hyperref}

\newtheorem{theorem}{Theorem}[section]

\newtheorem{proposition}[theorem]{Proposition}

\newtheorem{definition}[theorem]{Definition}
\newtheorem{remark}[theorem]{Remark}

\title{\textbf{An Introduction to the $p$-Differential Forms in Spinor Variables on the Minkowskian Manifold}}
\author{Julio Cesar Jaramillo Quiceno\\Universidad Cooperativa de Colombia\thanks{jcjaramilloq@unal.edu.co}, \thanks{julio.jaramilloq@campusucc.edu.co}}
\date{\today}

\begin{document}

\maketitle

\vspace{0.5cm}

\begin{abstract}
This article presents an introductory study of
$p$-differential forms in spinor variables defined on the Minkowskian (or Lorentzian) manifold. Starting from the algebraic and geometric foundations of spinor spaces and Clifford algebras, we establish the correspondence between Minkowski spacetime and Hermitian matrices in spinor coordinates. Within this framework, we construct differential 1-forms and 2-forms expressed in spinor notation, and formulate the Dirac operator in terms of exterior calculus. Furthermore, we extend this approach to the case of non-monogenic functions, following the method proposed in~\cite{Jaramillo2024}. The results illustrate the consistency of the spinorial exterior calculus and its potential applications in the study of relativistic field equations and geometric formulations of quantum theory.
\newline

\subsubsection*{Keywords:} Spinor variables, Clifford algebra, exterior differential forms, Dirac operator, Minkowski manifold.\newline

\textbf{MSC (2020):} 83Axx, 81T99, 81T20, 58A99

\end{abstract}

\section{Introduction}
In 1964, Bichteler formulated a spinorial exterior differential calculus, in a manner analogous to Cartan’s tensor calculus, which enabled the definition of metric spinors and a more general treatment of spacetime curvature~\cite{Bitcheler1964}. Later, in 1978, Graf proposed a formulation of spinors in terms of differential forms within a Clifford ideal, aiming to avoid the traditional concept of bivalency and to recover the results of the Dirac equation coupled to electromagnetism in flat space~\cite{Graf1978}.
Benn and Tucker subsequently reformulated the Dirac equation using the formalism of exterior calculus and Clifford algebras. They demonstrated that Dirac spinors can be represented in terms of differential forms and established the Lorentz covariance properties of the Dirac operator~\cite{Ben-Tucker1985}.
In parallel, Rodrigues, Souza, and Vaz introduced an alternative framework for defining spinor fields and superfields to describe matter with spin in physical theories, clearly distinguishing their approach from other existing formulations~\cite{RSV1995}.

Lachieze-Rey provided a concise and systematic introduction to spin groups and spinors from a foundational group-theoretic perspective, detailing the construction of low-dimensional Clifford algebras and establishing the intrinsic connection between these two formalisms. His presentation culminates with a discussion of the generalization of these ideas to arbitrary spacetimes through the theory of spin and spinor bundles~\cite{Lachieze2005}.
Torres del Castillo’s work, in turn, offers a rigorous and comprehensive development of the two-component spinor formalism in four-dimensional spaces, emphasizing its intrinsic geometric structure. It thoroughly illustrates the power of this framework for analysing spacetime geometry and for obtaining solutions to key field equations of physical relevance, such as the Dirac and Weyl equations in curved backgrounds~\cite{Torres2010}.

The main objective of the present work is to introduce the concept of \( p \)-differential forms in spinor variables on the Minkowskian (or Lorentzian) manifold. The article is organized as follows:
Section~2 provides the necessary preliminaries, while Section~3 develops the spinor coordinates on the Minkowski manifold. This section also proposes constructions for the differential 1-form and 2-form in spinor coordinates, formulated within the framework of exterior calculus for the Dirac operator and for the operator associated with non-monogenic functions, as proposed in reference~\cite{Jaramillo2024}.

\section{Preliminaries}

\begin{definition}\label{def:0}
\emph{Minkowski space} \(\mathbb{R}^{1,3}\) is the four-dimensional real vector space equipped with the indefinite non-degenerate bilinear form called the Minkowski metric. Formally, it is the pair \((\mathbb{R}^4, \eta)\), where \(\eta : \mathbb{R}^4 \times \mathbb{R}^4 \to \mathbb{R}\) is defined by
\begin{equation}
\eta(x, y) = x^0 y^0 - x^1 y^1 - x^2 y^2 - x^3 y^3,
\end{equation}
for vectors \(x = (x^0, x^1, x^2, x^3)\) and \(y = (y^0, y^1, y^2, y^3)\) in \(\mathbb{R}^4\). In matrix representation, the metric tensor is given by
\begin{equation}
\eta = \mathrm{diag}(+1, -1, -1, -1).
\end{equation}
The coordinate \(x^0 = ct\) represents the time dimension (with \(c\) the speed of light), while \((x^1, x^2, x^3)\) correspond to spatial coordinates. The invariance of the interval
\begin{equation}
ds^2 = \eta_{\mu\nu} dx^\mu dx^\nu = (dx^0)^2 - (dx^1)^2 - (dx^2)^2 - (dx^3)^2
\end{equation}
under Lorentz transformations \(\mathrm{O}(1,3)\) defines the geometry of Minkowski space \cite{Lachieze2005}, \cite{Varlamov2004}.
\end{definition}

\begin{definition}\label{def:1}
The O(1,3) group is the Lorentz group, comprising all linear transformations that preserve the Minkowski space-time metric. Formally, it is the generalized orthogonal group associated with a metric of signature $\mathrm{diag}(1,-1,-1,-1)$ or $\mathrm{diag}(-1,1,1,1)$. It consists of all $4 \times 4$ matrices with real entries that satisfy the relation \cite{Lachieze2005}
\begin{equation}
\Lambda^{T} \eta \Lambda = \eta .
\end{equation}
\end{definition}

\begin{definition}\label{def:2}
The group $\mathrm{SO}(1,3)$ is defined as the proper Lorentz group, consisting of all matrices $\Lambda \in \mathrm{O}(1,3)$ with determinant $+1$, that is,
\begin{equation}
\mathrm{SO}(1,3) := \left\lbrace \Lambda \in \mathrm{O}(1,3) \mid \det(\Lambda) = +1 \right\rbrace.
\end{equation}
This group preserves the space-time orientation (excluding spatial and temporal reflections) but may include transformations that reverse the direction of time (non-orthochronous) \cite{Lachieze2005}.
\end{definition}

\begin{definition}\label{def:3}
The group $\mathrm{SO}^+(1,3)$ is the proper, orthochronous Lorentz group, which is a subgroup of $\mathrm{SO}(1,3)$ that preserves the direction of time. Specifically,
\begin{equation}
\mathrm{SO}^+(1,3) := \left\lbrace \Lambda \in \mathrm{SO}(1,3) \mid \Lambda_{00} \geq 1 \right\rbrace,
\end{equation}
where the condition $\Lambda_{00} \geq 1$ ensures that the transformation does not reverse the time direction \cite{Lachieze2005}.
\end{definition}

\begin{definition}\label{def:4}
The Clifford algebra \(\text{Cl}(\mathbb{R}^{1,3})\) is the associative algebra generated by the vector space \(\mathbb{R}^{1,3}\) equipped with the Minkowski metric \(\eta = \text{diag}(+1, -1, -1, -1)\), subject to the fundamental relation:
\begin{equation}
\gamma_i \gamma_j + \gamma_j \gamma_i = 2\eta_{ij}\mathbf{1}, \quad \text{for } i,j = 0,1,2,3,
\end{equation}
where \(\{\gamma_0, \gamma_1, \gamma_2, \gamma_3\}\) are the generators of the algebra, and \(\mathbf{1}\) denotes the identity element. Equivalently, for any vectors \(u, v \in \mathbb{R}^{1,3}\), the Clifford product satisfies:
\begin{equation}
uv + vu = 2\eta(u,v)\mathbf{1}.
\end{equation}
As a real vector space, \(\text{Cl}(\mathbb{R}^{1,3})\) has dimension \(2^4 = 16\), and it is isomorphic to \(\text{Mat}(4, \mathbb{R})\) as an algebra \cite{Lachieze2005}.
\end{definition}

\begin{definition}\label{def:5}
The Clifford algebra \(\text{Cl}(\mathbb{R}^{1,3})\) is the quotient of the tensor algebra \(T(\mathbb{R}^{1,3})\) by the ideal generated by elements of the form \(v \otimes v - \eta(v,v)\mathbf{1}\) for all \(v \in \mathbb{R}^{1,3}\) \cite{Lachieze2005}, \cite{Varlamov2004}.
\end{definition}

\begin{definition}\label{def:6}
The group $\mathrm{Mat}_2(\mathbb{C})$ consists of all complex $2 \times 2$ matrices \cite{Lachieze2005}:
\begin{equation}
\mathrm{Mat}_2(\mathbb{C}) := \left\lbrace
\begin{bmatrix}
a & b \\
c & d
\end{bmatrix}
\ \middle|\ a, b, c, d \in \mathbb{C} \right\rbrace.
\end{equation}
\end{definition}

\begin{definition}\label{def:7}
The group $\mathrm{SL}(2,\mathbb{C})$ is the complex special linear group consisting of all $2 \times 2$ complex matrices with determinant $+1$, that is \cite{Lachieze2005},
\begin{equation}
\mathrm{SL}(2,\mathbb{C}) := \left\lbrace M \in \mathrm{Mat}_2(\mathbb{C}) \mid \det M = +1 \right\rbrace.
\end{equation}
\end{definition}

\begin{definition}\label{def:8}
The set $\mathrm{Herm}(2)$ consists of all $2 \times 2$ Hermitian matrices \cite{Lachieze2005}
\begin{equation}
\mathrm{Herm}(2) := \left\lbrace
H \in \mathrm{Mat}_2(\mathbb{C}) \ \middle|\ H^\dagger = H
\right\rbrace,
\end{equation}
where $H^\dagger$ denotes the conjugate transpose of $H$. Explicitly, any $H \in \mathrm{Herm}(2)$ can be written as
\begin{equation}
H = \begin{bmatrix}
t + z & x - iy \\
x + iy & t - z
\end{bmatrix},
\end{equation}
with $t, x, y, z \in \mathbb{R}$.
\end{definition}

\begin{definition}\label{def:9}
The group $\mathrm{Spin}(1,3)$ is the double connected cover of the proper orthochronous Lorentz group $\mathrm{SO}^+(1,3)$ and is a subgroup of the Clifford algebra $\mathrm{Cl}(\mathbb{R}^{1,3})$ associated with Minkowski space $\mathbb{R}^{1,3}$. This group consists of elements $S \in \mathrm{Cl}(\mathbb{R}^{1,3})$ satisfying \cite{Lachieze2005}, \cite{Varlamov2004}
\begin{equation}
S \gamma^{\mu} S^{-1} = \Lambda^{\mu}_{\nu} \gamma^{\nu},
\end{equation}
where $\Lambda \in \mathrm{SO}^+(1,3)$. Furthermore, $\mathrm{Spin}(1,3)$ is isomorphic to the group $\mathrm{SL}(2, \mathbb{C})$.
\end{definition}

\begin{remark}\label{rem:10}
The \emph{Minkowski Space-time:} is prevalent in physics, particularly in the context of special relativity. It emphasizes the \textit{affine} and \textit{vector space} structure. One can add points (events) with coordinates and there is a privileged origin. The primary focus is on the global $(t, x, y, z)$ coordinate system and the invariant interval $ds^2 = \eta_{\mu\nu}dx^\mu dx^\nu$ \cite{Misner-Thorne-Wheeler1973}.
\end{remark}

\begin{definition}\label{def:11}
The \emph{Minkowski manifold} is the quadruple $(\mathbb{R}^{1,3}, \eta, \mathrm{SO}^+(1,3), \mathrm{SL}(2, \mathbb{C}))$, where the metric $\eta$ defines the causal geometry, the group $\mathrm{SO}^+(1,3)$ describes its isometric symmetries, and the group $\mathrm{SL}(2, \mathbb{C})$ acts as its double cover, providing the spin structure and a fundamental representation via the matrices of $\mathrm{Herm}(2)$ \cite{Lachieze2005}.
\end{definition}

\begin{definition}\label{def:12}
The \emph{Dirac matrices in Minkowski space} are defined in the Weyl (chiral) representation as \cite{Schmidt-Watcher2007}
\begin{equation}\label{eq:min-gamma}
\gamma^\mu_{ab} = \begin{bmatrix}
0 & \sigma^\mu_{\alpha\dot{\beta}} \\
\bar{\sigma}^{\mu}_{\dot{\alpha}\beta} & 0
\end{bmatrix},
\end{equation}
where $\sigma^\mu = (\mathbf{1}_2, \sigma^i)$ and $\bar{\sigma}^\mu = (\mathbf{1}_2, -\sigma^i)$ denoting the Pauli matrices in the Minkowskian space. The \emph{spinor indices} $\alpha, \beta$ run over $\{1,2\}$ for left-handed Weyl spinors, while $\dot{\alpha}, \dot{\beta}$ run over $\{\dot{1}, \dot{2}\}$ for right-handed Weyl spinors.
\end{definition}

\begin{definition}\label{def:13}
Let $\mathcal{M} = \mathbb{R}^{1,3}$ be Minkowski space with coordinates $(x^0, x^1, x^2, x^3)$ and metric $\eta = \mathrm{diag}(+1, -1, -1, -1)$ \cite{Lachieze2005}.

\begin{enumerate}
    \item A \textbf{1-form} (or differential 1-form) is a section of the cotangent bundle $T^*M$, which can be expressed locally as \cite{Misner-Thorne-Wheeler1973}
    \begin{equation}
    \omega = \omega_\mu dx^\mu = \omega_0 dx^0 + \omega_1 dx^1 + \omega_2 dx^2 + \omega_3 dx^3,
    \end{equation}
    where $\omega_\mu$ are smooth functions and $\{dx^\mu\}$ form a basis for the cotangent space.

    \item A \textbf{2-form} is an antisymmetric $(0,2)$-tensor that can be written as \cite{Misner-Thorne-Wheeler1973}
    \begin{equation}
    F = \frac{1}{2} F_{\mu\nu} dx^\mu \wedge dx^\nu,
    \end{equation}
    where $F_{\mu\nu} = -F_{\nu\mu}$ are smooth functions, and the wedge product satisfies $dx^\mu \wedge dx^\nu = -dx^\nu \wedge dx^\mu$.
\end{enumerate}

The exterior derivative acts on these forms as:
\begin{itemize}
    \item For a 0-form (function) $f$: $df = \partial_\mu f dx^\mu$
    \item For a 1-form $\omega$: $d\omega = \partial_\mu \omega_\nu dx^\mu \wedge dx^\nu$
\end{itemize}

The Hodge star operator $\star$, which depends on the metric $\eta$, maps $p$-forms to $(4-p)$-forms in four dimensions.
\end{definition}

\section{Exterior calculus in the spinor coordinates}

\subsection{Spinor coordinates in the Minkowski manifold}

\begin{definition}
Spinor coordinates establish a correspondence between Minkowski space-time and the set of Hermitian matrices $\mathrm{Herm}(2) \subset \mathrm{Mat}_2(\mathbb{C})$. For coordinates $x^{\mu} \in M$, this correspondence is given by \cite{Lachieze2005}, \cite{Torres2010}
\begin{equation}
X = x^{\mu} \sigma_{\mu} = \begin{bmatrix}
x^{1\dot{1}} & x^{1\dot{2}} \\
x^{2\dot{1}} & x^{2\dot{2}}
\end{bmatrix} =
\begin{bmatrix}
x^0 + x^3 & x^1 - i x^2 \\
x^1 + i x^2 & x^0 - x^3
\end{bmatrix},
\end{equation}
where $\sigma_\mu = (\mathbf{1}, \vec{\sigma})$ with $\mu = 0,1,2,3$.
\end{definition}

\begin{remark}
Regarding notation, the use of indices in Minkowski space-time (Minkowski manifold) for spinors can be substituted by a change of indices for a standard tensor. For example, a vector $x^{\mu}$ can be expressed in spinor notation as $x^{\alpha\dot{\alpha}}$, and similarly, the derivative operator $\partial_{\mu}$ can be written as $\partial_{\alpha\dot{\alpha}}$ \cite{Lachieze2005}, \cite{Torres2010}.
\end{remark}

\begin{definition}
Spinor coordinates can be explicitly defined as follows \cite{Lachieze2005}, \cite{Torres2010}:
\begin{align}\label{eq:coordinates}
x_{\alpha\dot{\beta}}  & \equiv \ \frac{1}{\sqrt{2}} \sigma_{\alpha\dot{\beta}}^{\mu} x_{\mu}, \\
\label{eq:coordinates1}\partial^{\alpha\dot{\beta}} & \equiv \ \frac{1}{\sqrt{2}} \sigma^{\alpha\dot{\beta}}_{\nu} \partial^{\nu},
\end{align}
for $\mu = 0,1,2,3$.
\end{definition}

Following the above developments, we now define differential 1-forms and 2-forms in spinor notation.

\begin{proposition}
According to Eq.~\eqref{eq:coordinates} and Definition~\ref{def:13}, we can express the differential 1-form and 2-form in spinor notation as follows \cite{Torres2010}
\begin{align}\label{eq:relacion1}
dx_{\alpha\dot{\beta}} & \equiv \ \frac{1}{\sqrt{2}} \sigma_{\alpha\dot{\beta}}^{\mu} dx_{\mu}, \\ \label{eq:relacion2}
dx_{\alpha\dot{\beta}} & = \ i\epsilon^{\mu\nu\lambda}dx_{\mu}\wedge dx_{\nu},
\end{align}
where $\epsilon^{\mu\nu\lambda}$ is the Levi-Civita tensor.
\end{proposition}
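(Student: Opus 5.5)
The first relation \eqref{eq:relacion1} follows directly from Eq.~\eqref{eq:coordinates}. The matrices $\sigma^{\mu}_{\alpha\dot\beta}$ are constant, so applying the exterior derivative of Definition~\ref{def:13} to the $0$-forms $x_{\alpha\dot\beta}=\frac{1}{\sqrt2}\sigma^{\mu}_{\alpha\dot\beta}x_\mu$ and using linearity of $d$ gives $dx_{\alpha\dot\beta}=\frac{1}{\sqrt2}\sigma^{\mu}_{\alpha\dot\beta}\,dx_\mu$ at once. I will also note that, by the inversion $\operatorname{tr}(\sigma_\mu\bar\sigma_\nu)=2\eta_{\mu\nu}$, the four forms $dx_{\alpha\dot\beta}$ are again a (complex) basis of $T^*\mathcal M\otimes\mathbb C$. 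This justifies calling them the spinor $1$-forms.

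The second relation \eqref{eq:relacion2} cannot hold literally as written. Its left side is a $1$-form and its right side is a $2$-form, and $\epsilon^{\mu\nu\lambda}$ carries a free index $\lambda$. I will therefore read it as a statement about the spinor-valued $2$-form built from \eqref{eq:relacion1}. Writing $dX=\sqrt2\,(dx_{\alpha\dot\beta})=dx^\mu\sigma_\mu$, I consider the matrix-valued $2$-form
\begin{equation*}
dX\wedge dX=\sigma_\mu\sigma_\nu\,dx^\mu\wedge dx^\nu=\tfrac12[\sigma_\mu,\sigma_\nu]\,dx^\mu\wedge dx^\nu ,
\end{equation*}
where the symmetric part drops out by antisymmetry of $\wedge$. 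Since $\sigma_0=\mathbf 1$ commutes with everything, only spatial indices survive. The Pauli identity $[\sigma_j,\sigma_k]=2i\epsilon_{jkl}\sigma_l$ then gives
\begin{equation*}
dX\wedge dX=i\,\epsilon^{jkl}\,dx_j\wedge dx_k\,\sigma_l ,
\end{equation*}
with signs fixed by lowering indices with $\eta$. Reading off the $(\alpha\dot\beta)$ components, including the factors of $\frac{1}{\sqrt2}$ from \eqref{eq:relacion1}, yields the claimed form $i\epsilon^{\mu\nu\lambda}dx_\mu\wedge dx_\nu$ contracted with $\sigma_{\lambda,\alpha\dot\beta}$. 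This is the precise content I would state in place of \eqref{eq:relacion2}.

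The main obstacle is making this index bookkeeping honest. Products such as $\sigma_\mu\sigma_\nu$ mix dotted and undotted indices. The Lorentz-covariant version should use $\sigma^\mu\bar\sigma^\nu$ and the self-dual generators $\sigma^{\mu\nu}=\frac{i}{4}(\sigma^\mu\bar\sigma^\nu-\sigma^\nu\bar\sigma^\mu)$. In that version, $dx_{\alpha\dot\beta}\wedge dx_{\gamma\dot\delta}$ splits as $\epsilon_{\alpha\gamma}\Sigma_{\dot\beta\dot\delta}+\epsilon_{\dot\beta\dot\delta}\Sigma_{\alpha\gamma}$ into anti-self-dual and self-dual parts. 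I would first carry out the computation with plain Pauli commutators as above, since that is exactly what produces $i\epsilon^{jkl}$. I would then check that the result agrees with the symmetric spinor $2$-forms $\Sigma_{\alpha\gamma}$ of \cite{Torres2010} up to normalization.
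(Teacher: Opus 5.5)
Your derivation of \eqref{eq:relacion1} is correct. It also gives more than the paper does: the paper simply posits the relation with $\equiv$, whereas you obtain it by applying $d$ to Eq.~\eqref{eq:coordinates} with constant $\sigma^{\mu}_{\alpha\dot\beta}$.

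Your diagnosis that \eqref{eq:relacion2} is ill-typed is also correct, and the paper's proof does not get around it. The paper writes the $1$-form $dx_{\alpha\dot\beta}$ as the $2$-form $\frac12(\sigma^{\mu}_{\alpha\dot\beta}\bar\sigma^{\nu\dot\beta\alpha}-\sigma^{\nu}_{\alpha\dot\beta}\bar\sigma^{\mu\dot\beta\alpha})\,dx_\mu\wedge dx_\nu$ and then invokes the ``Pauli identity'' \eqref{eq:spin}. That identity fails in three ways:
\begin{itemize}
\item With $\alpha,\dot\beta$ summed, its left side is $\mathrm{tr}(\sigma^\mu\bar\sigma^\nu)-\mathrm{tr}(\sigma^\nu\bar\sigma^\mu)=2\eta^{\mu\nu}-2\eta^{\nu\mu}=0$.
\item Its right side carries free indices $\rho,\alpha,\dot\beta$ that do not appear on the left.
\item Substituting it anyway gives $i\epsilon^{\mu\nu\lambda\rho}\sigma_{\lambda\alpha\dot\beta}\,dx_\mu\wedge dx_\nu$ with a dangling $\rho$, which is not \eqref{eq:relacion2}.
\end{itemize}
So you and the paper share the same core idea: an antisymmetrized product of $\sigma$'s collapses to $i\epsilon\,\sigma$. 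Only your version applies a true identity to a well-defined object.

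The genuine gap is in your replacement statement, at the step ``reading off the $(\alpha\dot\beta)$ components \ldots\ contracted with $\sigma_{\lambda,\alpha\dot\beta}$''. The matrix product $dX\wedge dX$ contracts a dotted index against an undotted one. Under $X\mapsto AXA^\dagger$ it goes to $A\,dX\,A^\dagger A\,dX\,A^\dagger$, so it is covariant only for $A\in\mathrm{SU}(2)$. That is exactly why the $dx^0\wedge dx^l$ terms vanish (because $\sigma_0$ is central), and why the result looks like a contraction with $\sigma_{\lambda\alpha\dot\beta}$.

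Now carry out the covariant computation you propose, with $\tilde X=x^\mu\bar\sigma_\mu$ (the $\epsilon$-raised, i.e.\ adjugate, matrix). It gives
\begin{equation*}
dX\wedge d\tilde X=\tfrac12(\sigma_\mu\bar\sigma_\nu-\sigma_\nu\bar\sigma_\mu)\,dx^\mu\wedge dx^\nu=-\sigma_l\bigl(2\,dx^0\wedge dx^l+i\epsilon_{jkl}\,dx^j\wedge dx^k\bigr).
\end{equation*}
This is an (anti-)self-dual $2$-form whose matrix indices are of type $\alpha{}^{\beta}$. After lowering, it is the symmetric $\Sigma_{\alpha\beta}$ of Torres del Castillo. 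It is not of type $\sigma_{\lambda\alpha\dot\beta}$, and it contains the $dx^0\wedge dx^l$ part that your computation discards.

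Hence your two concluding claims are incompatible: that the content of \eqref{eq:relacion2} is $i\epsilon^{\mu\nu\lambda}dx_\mu\wedge dx_\nu$ contracted with $\sigma_{\lambda\alpha\dot\beta}$, and that this agrees with $\Sigma_{\alpha\gamma}$ up to normalization. The check you plan would fail on the $dx^0\wedge dx^l$ components.

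The honest substitute for \eqref{eq:relacion2} is the covariant statement. Equivalently, it is the splitting $dx_{\alpha\dot\beta}\wedge dx_{\gamma\dot\delta}=\epsilon_{\alpha\gamma}\bar\Sigma_{\dot\beta\dot\delta}+\epsilon_{\dot\beta\dot\delta}\Sigma_{\alpha\gamma}$ with (anti-)self-dual $\Sigma$. Your plain Pauli-commutator identity should be presented only as its rotation-covariant spatial part.
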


\begin{proof}
For the second relation, we express the spinorial differential form as:
\begin{equation}\label{eq:dx}
dx_{\alpha\dot{\beta}} = \frac{1}{2}(\sigma_{\alpha\dot{\beta}}^{\mu} \bar{\sigma}^{\nu\dot{\beta}\alpha} - \sigma_{\alpha\dot{\beta}}^{\nu} \bar{\sigma}^{\mu\dot{\beta}\alpha}) dx_{\mu} \wedge dx_{\nu}.
\end{equation}

Using the Pauli matrix identity:
\begin{equation}\label{eq:spin}
\sigma_{\alpha\dot{\beta}}^{\mu} \bar{\sigma}^{\nu\dot{\beta}\alpha} - \sigma_{\alpha\dot{\beta}}^{\nu} \bar{\sigma}^{\mu\dot{\beta}\alpha} = 2i \epsilon^{\mu\nu\lambda\rho} \sigma_{\lambda\alpha\dot{\beta}},
\end{equation}
and substituting \eqref{eq:spin} into \eqref{eq:dx}, we obtain the relation \eqref{eq:relacion2}.
\end{proof}

\begin{proposition}
For the case of Dirac matrices, we employ a relation analogous to \eqref{eq:spin} which corresponds to the spin tensor $\sigma^{\mu\nu}_{\alpha\dot{\beta}}$, that is
\begin{equation}\label{eq:tensor-espinorial}
\gamma_{ab}^{\mu}\gamma_{ab}^{\nu}-\gamma_{ab}^{\nu}\gamma_{ab}^{\mu}=2i\sigma_{\alpha\dot{\beta}}^{\mu\nu},
\end{equation}
where  $\gamma_{ab}^{\mu}$ is given by (\ref{eq:min-gamma}).
\end{proposition}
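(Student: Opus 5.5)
The proof is a direct block-matrix computation in the Weyl representation of Definition~\ref{def:12}. The first step is to read \eqref{eq:tensor-espinorial} as a matrix identity. The left-hand side $\gamma^\mu_{ab}\gamma^\nu_{ab}-\gamma^\nu_{ab}\gamma^\mu_{ab}$ is taken to mean the commutator $[\gamma^\mu,\gamma^\nu]$, with the inner index contracted: $\gamma^{\mu}{}_{ac}\gamma^{\nu}{}_{cb}-\gamma^{\nu}{}_{ac}\gamma^{\mu}{}_{cb}$. The symbol $\sigma^{\mu\nu}_{\alpha\dot\beta}$ on the right is taken to mean the spin tensor $\sigma^{\mu\nu}=\tfrac{i}{2}[\gamma^\mu,\gamma^\nu]$. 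Its Weyl blocks are
\begin{equation*}
\sigma^{\mu\nu}_{L}=\tfrac{i}{2}\bigl(\sigma^\mu\bar\sigma^\nu-\sigma^\nu\bar\sigma^\mu\bigr),\qquad
\sigma^{\mu\nu}_{R}=\tfrac{i}{2}\bigl(\bar\sigma^\mu\sigma^\nu-\bar\sigma^\nu\sigma^\mu\bigr).
\end{equation*}
These are the analogues of the two-spinor combination appearing in \eqref{eq:spin}.

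Next, multiply the off-diagonal blocks of \eqref{eq:min-gamma}:
\begin{equation*}
\gamma^\mu\gamma^\nu=\begin{bmatrix}\sigma^\mu\bar\sigma^\nu & 0\\ 0 & \bar\sigma^\mu\sigma^\nu\end{bmatrix}.
\end{equation*}
Hence
\begin{equation*}
[\gamma^\mu,\gamma^\nu]=\operatorname{diag}\bigl(\sigma^\mu\bar\sigma^\nu-\sigma^\nu\bar\sigma^\mu,\ \bar\sigma^\mu\sigma^\nu-\bar\sigma^\nu\sigma^\mu\bigr).
\end{equation*}
The index bookkeeping is as follows: the product of an $\alpha\dot\beta$ block with a $\dot\beta\gamma$ block contracts over $\dot\beta$, leaving undotted–undotted and dotted–dotted diagonal blocks. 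I would state this explicitly, since the paper's label $\sigma^{\mu\nu}_{\alpha\dot\beta}$ is schematic.

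Then check each block using the Pauli identities $\sigma^i\sigma^j=\delta^{ij}\mathbf 1+i\epsilon^{ijk}\sigma^k$. There are two cases.
\begin{itemize}
\item For $(\mu,\nu)=(0,i)$, the upper block is $-2\sigma^i$ and the lower block is $2\sigma^i$.
\item For $(\mu,\nu)=(i,j)$, both blocks are $-2i\epsilon^{ijk}\sigma^k$.
\end{itemize}
The symmetric parts cancel in the commutator because $\sigma^\mu\bar\sigma^\nu+\sigma^\nu\bar\sigma^\mu=2\eta^{\mu\nu}\mathbf 1$, which is the Clifford relation of Definition~\ref{def:4}. Comparing these blocks with $\tfrac{i}{2}$ times themselves gives $[\gamma^\mu,\gamma^\nu]=-2i\,\sigma^{\mu\nu}$. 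This identifies the right-hand side of \eqref{eq:tensor-espinorial} up to the normalization convention.

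The main obstacle is exactly this sign and normalization convention. With the standard $\sigma^{\mu\nu}=\tfrac{i}{2}[\gamma^\mu,\gamma^\nu]$, the identity carries $-2i$. To obtain $+2i$ as written, I would take $\sigma^{\mu\nu}:=-\tfrac{i}{2}[\gamma^\mu,\gamma^\nu]$, i.e.\ $\tfrac{1}{2i}[\gamma^\mu,\gamma^\nu]$, as the definition of the spin tensor, consistent with the sign in \eqref{eq:spin}. With that choice the block computation proves the statement.
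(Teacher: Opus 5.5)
The paper gives no proof of this proposition. It simply posits the commutator relation ``analogous to'' \eqref{eq:spin}, so in effect the relation \emph{is} the definition of $\sigma^{\mu\nu}_{\alpha\dot\beta}$, which the paper never defines anywhere else.

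Your proposal is correct and says more than the paper does. The block computation is right: $\gamma^\mu\gamma^\nu=\operatorname{diag}(\sigma^\mu\bar\sigma^\nu,\bar\sigma^\mu\sigma^\nu)$, your $(0,i)$ and $(i,j)$ block values check out, and the symmetric parts cancel via $\sigma^\mu\bar\sigma^\nu+\sigma^\nu\bar\sigma^\mu=2\eta^{\mu\nu}\mathbf{1}$. It makes explicit two things the paper leaves unexamined:
\begin{itemize}
\item the $+2i$ forces the normalization $\sigma^{\mu\nu}=\tfrac{1}{2i}[\gamma^\mu,\gamma^\nu]$, which is opposite in sign to the usual $\tfrac{i}{2}[\gamma^\mu,\gamma^\nu]$;
\item $\sigma^{\mu\nu}$ is block-diagonal, with undotted--undotted and dotted--dotted blocks, so the label $\alpha\dot\beta$ cannot be taken literally.
\end{itemize}
Be clear, though, that once you adopt that normalization the identity is a tautology. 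The block computation explains what the object is, but it carries no logical weight. The real content of the ``proof'' is the choice of convention, which is also all the paper's bare assertion amounts to.

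You should drop the phrase ``consistent with the sign in \eqref{eq:spin}''. As written, the left side of \eqref{eq:spin} contracts both $\alpha$ and $\dot\beta$. That gives $\operatorname{tr}(\sigma^\mu\bar\sigma^\nu)-\operatorname{tr}(\sigma^\nu\bar\sigma^\mu)=2\eta^{\mu\nu}-2\eta^{\nu\mu}=0$, while the right side still carries a free index $\rho$. Even if you read it as an uncontracted matrix identity, your own upper-block values $-2\sigma^i$ and $-2i\epsilon^{ijk}\sigma^k$ are not of the form $2i\epsilon^{\mu\nu\lambda\rho}\sigma_{\lambda}$. So \eqref{eq:spin} cannot fix a sign, and the $+2i$ in the statement has to be taken as a stipulation.
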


In view of the preceding discussion, we define a spinor field $\psi$ on the Minkowski manifold $\mathcal{M}$ in the following manner:

\begin{definition}\label{def:function}
Let $\mathcal{M}$ denote the Minkowski manifold. A spinor field $\psi$ on $\mathcal{M}$ is defined as a mapping
\begin{align*}
\psi : \mathcal{M} &\longrightarrow \mathbb{C}^4, \\
x_{\mu} &\longmapsto \psi(x_{\mu}),
\end{align*}
where $x_{\mu} = (x^0, x^1, x^2, x^3)$ represents space-time coordinates, and the field $\psi(x)$ carries spinor structure that transforms appropriately under Lorentz transformations.
\end{definition}

\begin{proposition}\label{def:Dirac-operator}
The Dirac operator in spinor notation is given by:
\begin{equation}\label{eq:Dirac}
D_{\alpha\dot{\beta}}=\sqrt{2}(\Pi^{+}\otimes\partial_{\alpha\dot{\beta}}+\Pi^{-}\otimes\partial_{\dot{\alpha}\beta}),
\end{equation}
where $\Pi^{+}$ and $\Pi^{-}$ are matrices $2\times 2$:
\begin{equation}
\Pi^{+}=\begin{bmatrix}
0 & 1\\
0 & 0
\end{bmatrix}, \quad \Pi^{-}=\begin{bmatrix}
0 & 0\\
1 & 0
\end{bmatrix}.
\end{equation}
\end{proposition}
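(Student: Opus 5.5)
The plan is to start from the usual Dirac operator $D=\gamma^{\mu}\partial_{\mu}$ acting on spinor fields $\psi:\mathcal{M}\to\mathbb{C}^4$ (Definition~\ref{def:function}). Using the Weyl representation \eqref{eq:min-gamma}, every $\gamma^\mu$ is off-diagonal, so it splits as a tensor product of a $2\times 2$ block-position matrix with a $2\times 2$ Pauli-type block:
\begin{equation*}
\gamma^\mu=\Pi^{+}\otimes\sigma^{\mu}_{\alpha\dot{\beta}}+\Pi^{-}\otimes\bar{\sigma}^{\mu}_{\dot{\alpha}\beta}.
\end{equation*}
Here $\Pi^{+}$ selects the upper-right block and $\Pi^{-}$ the lower-left block. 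This identity is read off directly by comparing entries with \eqref{eq:min-gamma}.

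Substituting into $D$ gives
\begin{equation*}
D=\Pi^{+}\otimes\sigma^{\mu}_{\alpha\dot\beta}\partial_\mu+\Pi^{-}\otimes\bar\sigma^{\mu}_{\dot\alpha\beta}\partial_\mu .
\end{equation*}
Next I contract the Pauli matrices with the derivatives using the spinor-coordinate definition \eqref{eq:coordinates1}, lowered with $\eta$ and the $\epsilon$-spinor. This gives $\partial_{\alpha\dot\beta}=\tfrac{1}{\sqrt2}\sigma^{\mu}_{\alpha\dot\beta}\partial_\mu$, so $\sigma^{\mu}_{\alpha\dot\beta}\partial_\mu=\sqrt2\,\partial_{\alpha\dot\beta}$. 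In the same way $\bar\sigma^{\mu}_{\dot\alpha\beta}\partial_\mu=\sqrt2\,\partial_{\dot\alpha\beta}$, where $\partial_{\dot\alpha\beta}$ is defined through $\bar\sigma$. Factoring out $\sqrt2$ then yields exactly \eqref{eq:Dirac}.

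The main obstacle is the bookkeeping of index positions and normalization. The barred block $\bar\sigma^\mu=(\mathbf 1,-\sigma^i)$ has to be identified with the spinor derivative carrying dotted and undotted indices in swapped order. I would check this using the completeness relation $\sigma^\mu_{\alpha\dot\beta}\bar\sigma_\mu^{\dot\beta\alpha}$ together with the $\epsilon$-spinor raising and lowering, $\bar\sigma^{\mu\,\dot\alpha\alpha}=\epsilon^{\alpha\beta}\epsilon^{\dot\alpha\dot\beta}\sigma^{\mu}_{\beta\dot\beta}$. This step also fixes the factor $1/\sqrt2$ from \eqref{eq:coordinates1}, so it must be verified that no stray sign survives. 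If needed, I would confirm the result componentwise for $\mu=0,\dots,3$, for instance checking that $\partial_{1\dot1}$ corresponds to $\tfrac{1}{\sqrt2}(\partial_0+\partial_3)$ under the convention chosen.
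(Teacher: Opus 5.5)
Your proposal is correct and follows essentially the paper's own argument: you write $\gamma^{\mu}\partial_{\mu}$ in the Weyl representation as $\Pi^{+}\otimes\sigma^{\mu}_{\alpha\dot{\beta}}\partial_{\mu}+\Pi^{-}\otimes\bar{\sigma}^{\mu}_{\dot{\alpha}\beta}\partial_{\mu}$, then replace each contracted block by $\sqrt{2}$ times the spinor derivative from \eqref{eq:coordinates1}. Your extra step of defining $\partial_{\dot{\alpha}\beta}$ through $\bar{\sigma}$ (equivalently, raising both indices of $\partial_{\alpha\dot{\beta}}$ with $\epsilon$) supplies something the paper passes over silently, and it is genuinely needed, because $\bar{\sigma}^{\mu}\partial_{\mu}$ is not just the index-transposed $\sqrt{2}\,\partial_{\beta\dot{\alpha}}$ (its $(1,1)$ entry is $\partial_0-\partial_3$, not $\partial_0+\partial_3$).
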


\begin{proof}
First, we can rewrite Eq. (\ref{eq:coordinates1}) in the following form:
\begin{equation}\label{eq:coordinates2}
\sqrt{2}\partial^{\alpha\dot{\beta}} = \sigma^{\alpha\dot{\beta}}_{\mu}\partial^{\mu}.
\end{equation}

Multiplying Eq. (\ref{eq:min-gamma}) by $\partial_{\mu}$ yields the Dirac operator:
\begin{equation}\label{eq:coordinates3}
\partial_{\mu}\gamma^{\mu}_{ab} =
\begin{bmatrix}
0 & \partial_{\mu}\sigma^\mu_{\alpha\dot{\beta}} \\
\partial_{\mu}\bar{\sigma}^{\mu}_{\dot{\alpha}\beta} & 0
\end{bmatrix}.
\end{equation}

This can, in turn, be expressed as a linear combination of matrices:
\begin{equation}\label{eq:linear}
\partial_\mu\gamma^{\mu}_{ab} = \begin{bmatrix}
0 & 1\\
0 & 0
\end{bmatrix} \otimes \left(\sigma^{\mu}_{\alpha\dot{\beta}}\partial_\mu\right) + \begin{bmatrix}
0 & 0\\
1 & 0
\end{bmatrix} \otimes \left(\bar{\sigma}^{\mu}_{\dot{\alpha}\beta}\partial_{\mu}\right).
\end{equation}

Finally, substituting Eq. (\ref{eq:coordinates2}) into Eq. (\ref{eq:linear}), we obtain:
\begin{equation}\label{eq:final}
\partial_\mu\gamma^{\mu}_{ab} = \sqrt{2} \begin{bmatrix}
0 & 1\\
0 & 0
\end{bmatrix} \otimes \partial_{\alpha\dot{\beta}} + \sqrt{2} \begin{bmatrix}
0 & 0\\
1 & 0
\end{bmatrix} \otimes \partial_{\dot{\alpha}\beta},
\end{equation}
and denoting $\begin{bmatrix} 0 & 1 \\ 0 & 0 \end{bmatrix}$ and $\begin{bmatrix} 0 & 0 \\ 1 & 0 \end{bmatrix}$ as $\Pi^{+}$ and $\Pi^{-}$ respectively, we finally obtain (\ref{eq:Dirac}); with this, we conclude the proof.
\end{proof}

\subsection{Differential forms for the spinor coordinates}

\begin{proposition}\label{prop:Tensor-Dirac}
From Definition (\ref{def:function}) and Eq. (\ref{eq:tensor-espinorial}) and the above definition, we construct a differential 2-form for the spinor field $\psi$ expressed through the Dirac operator in spinor notation. This is
\begin{equation}\label{eq:2-form-diff}
d\psi =i\sigma^{\alpha\dot{\beta}}_{\mu\nu}D_{\alpha\dot{\beta}}\psi\wedge dx^{\mu}\wedge dx^{\nu}.
\end{equation}
\end{proposition}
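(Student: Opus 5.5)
The plan is to construct \eqref{eq:2-form-diff} in the same spirit as the proof of the second relation in \eqref{eq:relacion2}: begin with the ordinary exterior derivative of the spinor field in Cartesian coordinates, pass to spinor notation through \eqref{eq:coordinates}--\eqref{eq:coordinates1}, and finally trade the antisymmetric product of Dirac matrices for the spin tensor by means of \eqref{eq:tensor-espinorial}. Concretely, following Definition~\ref{def:13} and treating $\psi$ of Definition~\ref{def:function} componentwise, I would first write $d\psi=\partial_\mu\psi\,dx^\mu$. I would then replace the Cartesian derivative by the Dirac operator, using $\partial\!\!\!/=\gamma^\mu\partial_\mu$ and the Clifford relation of Definition~\ref{def:4}, $\gamma^\nu\gamma_\nu=4\cdot\mathbf{1}$, to insert $\gamma_\nu\gamma^\nu$ so that the first-order object becomes $\gamma^\nu\partial\!\!\!/\,\psi\wedge dx^\mu$ with one additional Dirac matrix remaining to be contracted.

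The second step uses the Clifford relation again to split $\gamma^\mu\gamma^\nu$ into its symmetric and antisymmetric parts. The symmetric part is $\eta^{\mu\nu}\mathbf{1}$, and contracting it with $dx_\mu\wedge dx_\nu$ makes it vanish identically, by the antisymmetry of the wedge product stated in Definition~\ref{def:13}. Only the antisymmetric part $\tfrac12(\gamma^\mu\gamma^\nu-\gamma^\nu\gamma^\mu)$ survives. By \eqref{eq:tensor-espinorial} this equals $i\sigma^{\mu\nu}_{\alpha\dot\beta}$, which accounts for the factor $i$ in \eqref{eq:2-form-diff}. At this point the index $\nu$ is paired with $dx_\nu$, and $\mu$ is paired with both the derivative and $dx_\mu$.

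The third step converts the contracted derivative into spinor form. Proposition~\ref{def:Dirac-operator} gives $\gamma^\mu\partial_\mu=D_{\alpha\dot\beta}$, including the normalisation $\sqrt2$ that comes from \eqref{eq:coordinates2}. Lowering and raising indices with $\eta$ then moves the spin tensor to $\sigma^{\alpha\dot\beta}_{\mu\nu}$, so that its spinor indices contract against those of $D_{\alpha\dot\beta}$. The same dictionary \eqref{eq:relacion1} carries the differentials over, and collecting terms should give $i\sigma^{\alpha\dot\beta}_{\mu\nu}D_{\alpha\dot\beta}\psi\wedge dx^\mu\wedge dx^\nu$.

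The main obstacle is bookkeeping, and it has two parts.
\begin{itemize}
\item The form degree must be tracked: $d\psi$ of a $0$-form is a $1$-form, while the right-hand side carries $dx^\mu\wedge dx^\nu$. The statement should therefore be read, as in Proposition~\ref{prop:Tensor-Dirac}, as a \emph{constructed} $2$-form, namely the exterior derivative of the $1$-form $D\psi$ or the $2$-form it induces. I would make this explicit by applying $d$ a second time to the Dirac-valued $1$-form rather than to $\psi$ itself.
\item The normalisations $\sqrt2$ and $\tfrac12$ coming from \eqref{eq:coordinates} and \eqref{eq:tensor-espinorial} must cancel exactly. I would check this first on a single component, for example $\mu\nu=01$, before asserting the general identity.
\end{itemize}
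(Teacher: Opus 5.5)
There is a genuine gap. Your route tries to \emph{derive} \eqref{eq:2-form-diff} from the Cartesian $d\psi=\partial_\mu\psi\,dx^\mu$, and none of the steps you list can turn that $1$-form into a $2$-form. Inserting $\tfrac14\gamma_\nu\gamma^\nu=\mathbf{1}$ and splitting products of Dirac matrices only rearranges the coefficient of $dx^\mu$. Step two also fails on its own terms: $\tfrac14\gamma_\nu\gamma^\nu\partial_\mu\psi$ contains no Dirac operator, because the index of the Dirac matrix would have to be contracted with the derivative index. Indeed $\gamma_\mu\gamma^\rho\partial_\rho=\partial_\mu+\tfrac12(\gamma_\mu\gamma^\rho-\gamma^\rho\gamma_\mu)\partial_\rho\neq\partial_\mu$. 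So in step three there is no $dx_\mu\wedge dx_\nu$ for the symmetric part to vanish against. The index pattern you arrive at, with $\mu$ carried at once by a Dirac matrix, the derivative and $dx_\mu$, is not well formed.

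The repair in your last bullet does not close the gap either. Applying $d$ to $d\psi$ gives zero. Applying it to a Dirac-valued $1$-form such as $\gamma_\nu\psi\,dx^\nu$ gives $\gamma_\nu\partial_\mu\psi\,dx^\mu\wedge dx^\nu$, whose derivative is again contracted with a differential rather than a Dirac matrix. By the three-gamma identity, $\tfrac12(\gamma_\mu\gamma_\nu-\gamma_\nu\gamma_\mu)\gamma^\rho\partial_\rho\psi\,dx^\mu\wedge dx^\nu$ equals $-2\gamma_\nu\partial_\mu\psi\,dx^\mu\wedge dx^\nu$ plus a term proportional to $\epsilon_{\lambda\mu\nu\rho}\gamma^\lambda\gamma^5\partial^\rho\psi\,dx^\mu\wedge dx^\nu$. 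So this route does not recover the stated formula.

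The paper avoids the problem because it never starts from $\partial_\mu\psi\,dx^\mu$; the $2$-form degree is put in by definition. It first forms the Clifford analogue of \eqref{eq:dx}, namely the spinor-valued $2$-form $dx_{\alpha\dot\beta}=\tfrac{i}{2}(\gamma^\mu_{ab}\gamma^\nu_{ab}-\gamma^\nu_{ab}\gamma^\mu_{ab})\,dx_\mu\wedge dx_\nu$, rewritten via \eqref{eq:tensor-espinorial} as $i\sigma^{\mu\nu}_{\alpha\dot\beta}\,dx^\mu\wedge dx^\nu$. It then sets $d\psi=D_{\alpha\dot\beta}\psi\wedge dx_{\alpha\dot\beta}$, imitating $df=\partial_\mu f\,dx^\mu$ with $D_{\alpha\dot\beta}$ of Proposition~\ref{def:Dirac-operator} in place of $\partial_\mu$ and this $2$-form in place of $dx^\mu$. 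The result follows by substitution and moving $i\sigma^{\mu\nu}_{\alpha\dot\beta}$ to the front.

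These two steps are the whole argument, and both are missing from your plan: the spinorial $2$-form $dx_{\alpha\dot\beta}$ (not the $1$-form dictionary \eqref{eq:relacion1} you invoke), and the ansatz $d\psi=D_{\alpha\dot\beta}\psi\wedge dx_{\alpha\dot\beta}$. Your own remark that $d\psi$ of a $0$-form is a $1$-form is exactly why the statement must be read as this construction rather than as a derivable identity. Incidentally, your normalisation $\tfrac12(\gamma^\mu\gamma^\nu-\gamma^\nu\gamma^\mu)=i\sigma^{\mu\nu}$ is the one consistent with \eqref{eq:tensor-espinorial}. Taken literally, that equation turns the paper's prefactor $\tfrac{i}{2}$ into $-\sigma^{\mu\nu}$.
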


\begin{proof}
Let us first consider the analogue of the 2-form in Eq. (\ref{eq:dx}) in relation to Eq. (\ref{eq:tensor-espinorial}), that is:
\begin{equation}\label{eq:2-cliffford}
dx_{\alpha\dot{\beta}} = \frac{i}{2}(\gamma_{ab}^{\mu}\gamma_{ab}^{\nu}-\gamma_{ab}^{\nu}\gamma_{ab}^{\mu})dx_\mu \wedge dx_\nu =i\sigma^{\mu\nu}_{\alpha\dot{\beta}} dx^{\mu} \wedge dx^{\nu},
\end{equation}
Then
\begin{align*}
d\psi  & = \ D_{\alpha\dot{\beta}}\psi\wedge dx_{\alpha\dot{\beta}},\\
& = \ D_{\alpha\dot{\beta}}\psi\wedge i\sigma^{\mu\nu}_{\alpha\dot{\beta}}dx_{\mu}\wedge dx_{\nu},\\
& = \ i\sigma^{\mu\nu}_{\alpha\dot{\beta}}D_{\alpha\dot{\beta}}\psi\wedge dx_{\mu}\wedge dx_{\nu}.
\end{align*}
\end{proof}

\begin{remark}
The Dirac operator in spinor notation applied to function $\psi$, is given by $D_{\alpha\dot{\beta}}\psi$.
\end{remark}

Now, we can apply the same treatment to the case of non-monogenic functions as proposed in reference \cite{Jaramillo2024}.

\begin{definition}
The expression (17) from reference \cite{Jaramillo2024} can be formulated in spinor notation as follows:
\begin{equation}\label{eq:dirac-spinor}
\textit{\textbf{D}}_{\alpha\dot{\beta}} = \gamma^{\mu}_{\alpha\dot{\beta}} \partial_{\nu} + \gamma^{\nu}_{\alpha\dot{\beta}} \partial_{\mu}.
\end{equation}
Here, $\gamma^{\mu}_{\alpha\dot{\beta}}$ and $\gamma^{\nu}_{\alpha\dot{\beta}}$ are spinor matrices defined by
\begin{equation}\label{eq:gamma-definitions}
\gamma^{\mu}_{\dot{\alpha}\beta} = \frac{1}{4} \gamma^{\mu}_{ab} \gamma^{\nu}_{ab} \gamma^{\rho}_{ab} \gamma_{\mu}^{ab} \eta_{\mu\nu}, \quad
\gamma^{\nu}_{ab} = \frac{1}{4} \gamma^{\nu}_{ab} \gamma^{\mu}_{ab} \gamma^{\rho}_{ab} \gamma_{\nu}^{ab} \eta_{\mu\nu},
\end{equation}
which satisfy the algebraic relations specified in equations (14) and (15) of reference \cite{Jaramillo2024}.
\end{definition}

\begin{proposition}\label{prop:spinor-forms}
Using the preceding propositions and following an approach analogous to the Dirac operator studied in Proposition 3.8, we can define differential 1-forms and 2-forms in spinor notation as follows.

The differential 1-form in spinor notation is expressed as:
\begin{equation}\label{eq:1-form}
x^{\alpha\dot{\beta}} = x^{\mu} dx^{\mu} + x^{\nu} dx^{\nu} + x^{\mu} dx^{\nu} \gamma^{\nu}_{\alpha\dot{\beta}} \gamma^{\mu}_{\alpha\dot{\beta}} + x^{\nu} dx^{\mu} \gamma^{\mu}_{\alpha\dot{\beta}} \gamma^{\nu}_{\alpha\dot{\beta}},
\end{equation}

and the corresponding differential 2-form is given by:
\begin{equation}\label{eq:2-form}
dx^{\alpha\dot{\beta}} = x^{\mu} dx^{\mu} \wedge dx^{\nu} + x^{\nu} dx^{\nu} \wedge dx^{\mu} + x^{\mu} dx^{\nu} \gamma^{\nu}_{\alpha\dot{\beta}} \gamma^{\mu}_{\alpha\dot{\beta}} \wedge dx^{\mu} + x^{\nu} dx^{\mu} \gamma^{\mu}_{\alpha\dot{\beta}} \gamma^{\nu}_{\alpha\dot{\beta}} \wedge dx^{\nu}.
\end{equation}
\end{proposition}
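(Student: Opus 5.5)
The proposition is essentially definitional, so the proof will be a construction by analogy: I will mimic the passage from Eq.~\eqref{eq:coordinates} to Eq.~\eqref{eq:relacion1}, and from there to Eq.~\eqref{eq:2-cliffford}, replacing the Dirac operator $D_{\alpha\dot{\beta}}$ of Proposition~\ref{def:Dirac-operator} by the non-monogenic operator $\textit{\textbf{D}}_{\alpha\dot{\beta}}$ of Eq.~\eqref{eq:dirac-spinor}.

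\textbf{Step 1: the 1-form.} Since $\textit{\textbf{D}}_{\alpha\dot{\beta}}$ is symmetric under the exchange $\mu\leftrightarrow\nu$ of the roles of $\gamma^{\mu}_{\alpha\dot{\beta}}\partial_\nu$ and $\gamma^{\nu}_{\alpha\dot{\beta}}\partial_\mu$, the associated spinor coordinate should be built from the pair $(x^\mu,x^\nu)$ and their differentials with the same symmetry.
\begin{enumerate}
\item Pair the coordinate and its differential along each direction, giving the diagonal terms $x^\mu dx^\mu + x^\nu dx^\nu$. This corresponds to the identity part of the products $\gamma^{\mu}_{\alpha\dot{\beta}}\gamma^{\mu}_{\alpha\dot{\beta}}$ and $\gamma^{\nu}_{\alpha\dot{\beta}}\gamma^{\nu}_{\alpha\dot{\beta}}$, which reduces to $\eta$ times the identity by the Clifford relation of Definition~\ref{def:4}.
\item The cross terms $x^\mu dx^\nu$ and $x^\nu dx^\mu$ do not reduce to the identity. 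They must keep the products $\gamma^{\nu}_{\alpha\dot{\beta}}\gamma^{\mu}_{\alpha\dot{\beta}}$ and $\gamma^{\mu}_{\alpha\dot{\beta}}\gamma^{\nu}_{\alpha\dot{\beta}}$ respectively.
\end{enumerate}
To fix the ordering of each product, I will contract the operator with itself, $\textit{\textbf{D}}\,\textit{\textbf{D}}$ in the sense of the operator of Eq.~(17) of \cite{Jaramillo2024}. Expanding gives four terms, and each is then read off by the dictionary $\partial\mapsto$ (coordinate, differential). I will use the algebraic relations (14)--(15) of \cite{Jaramillo2024}, together with Eq.~\eqref{eq:gamma-definitions}, to justify that the diagonal $\gamma$-products collapse. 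This yields Eq.~\eqref{eq:1-form}.

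\textbf{Step 2: the 2-form.} I will apply the exterior derivative term by term, as in Definition~\ref{def:13}, following the pattern of Proposition~\ref{prop:Tensor-Dirac}: wedge each term with the differential of the complementary index.
\begin{enumerate}
\item The terms $x^\mu dx^\mu$ and $x^\nu dx^\nu$ acquire $\wedge\, dx^\nu$ and $\wedge\, dx^\mu$ respectively.
\item The mixed terms acquire $\wedge\, dx^\mu$ and $\wedge\, dx^\nu$.
\end{enumerate}
The $\gamma$-products are constant matrices, so they pass through $d$ and through the wedge unchanged. This mirrors Eq.~\eqref{eq:2-cliffford}, where the antisymmetric combination of gamma matrices multiplies $dx_\mu\wedge dx_\nu$. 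Collecting terms gives Eq.~\eqref{eq:2-form}.

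\textbf{Main obstacle.} The hard step is justifying consistency, not the algebra. Without summation conventions, terms like $x^\mu dx^\mu\wedge dx^\nu$ are not literally the exterior derivative of $x^\mu dx^\mu$, which is closed. I would first try to reinterpret $d$ here as the operator ``wedge with the complementary differential'' induced by $\textit{\textbf{D}}_{\alpha\dot{\beta}}$, as was done for $d\psi$ in Proposition~\ref{prop:Tensor-Dirac}. I would then check that, under this reading, antisymmetry of $\wedge$ makes the result compatible with the Clifford relations of Definition~\ref{def:4}. If that fails, the statement should be presented as a definition motivated by the analogy rather than as a derived identity.
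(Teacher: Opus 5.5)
Your route is the paper's own. The paper defines the 1-form as the product $(x^{\mu}\gamma^{\nu}_{\alpha\dot{\beta}}+x^{\nu}\gamma^{\mu}_{\alpha\dot{\beta}})(dx^{\mu}\gamma^{\nu}_{\alpha\dot{\beta}}+dx^{\nu}\gamma^{\mu}_{\alpha\dot{\beta}})$. Your $\textit{\textbf{D}}\textit{\textbf{D}}$-plus-dictionary gives exactly this product, provided the first factor goes to coordinates and the second to differentials, and it yields the same four terms with the same $\gamma$-orderings. The paper then collapses the diagonal terms and simply declares Eq.~\eqref{eq:2-form} ``defined by hypothesis,'' which is your fallback; your worry that it is not a genuine exterior derivative is legitimate and is not addressed there.

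The one step to fix is the collapse. The paper uses $(\gamma^{\mu}_{\alpha\dot{\beta}})^{2}=(\gamma^{\nu}_{\alpha\dot{\beta}})^{2}=+\mathbb{I}_{4}$, attributed to the modified matrices of Eq.~\eqref{eq:gamma-definitions}. The Clifford relation of Definition~\ref{def:4}, which you cite in Step~1, instead gives $\eta^{\mu\mu}\mathbf{1}$, i.e.\ $-\mathbf{1}$ for a spatial index. Since at least one of $\mu\neq\nu$ is spatial, that would put a minus sign on $x^{\mu}dx^{\mu}$ or $x^{\nu}dx^{\nu}$ and fail to reproduce Eq.~\eqref{eq:1-form}. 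Drop the appeal to Definition~\ref{def:4} and rely only on the $+\mathbb{I}_{4}$ property, as the paper does.
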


\begin{proof}
We begin by expanding the expression for the spinor-valued 1-form:
\begin{align*}
x^{\alpha\dot{\beta}} & = (x^{\mu}\gamma^{\nu}_{\alpha\dot{\beta}} + x^{\nu}\gamma^{\mu}_{\alpha\dot{\beta}})(dx^{\mu}\gamma^{\nu}_{\alpha\dot{\beta}} + dx^{\nu}\gamma^{\mu}_{\alpha\dot{\beta}}),\\
& = x^{\mu}dx^{\mu}(\gamma^{\nu}_{\alpha\dot{\beta}})^{2} + x^{\mu}dx^{\nu}\gamma^{\nu}_{\alpha\dot{\beta}}\gamma^{\mu}_{\alpha\dot{\beta}} + x^{\nu}dx^{\mu}\gamma^{\mu}_{\alpha\dot{\beta}}\gamma^{\nu}_{\alpha\dot{\beta}} + x^{\nu}dx^{\nu}(\gamma^{\mu}_{\alpha\dot{\beta}})^{2}.
\end{align*}

Taking into account the  algebra property $(\gamma^{\mu}_{\alpha\dot{\beta}})^{2} = (\gamma^{\nu}_{\alpha\dot{\beta}})^{2} = \mathbb{I}_{4}$, we obtain equation (\ref{eq:1-form}). The 2-form in equation (\ref{eq:2-form}) is defined by hypothesis, following the standard construction of differential forms from the exterior derivative of the 1-form.
\end{proof}

Taking into consideration Definition 3.6 and 3.10, we can define the differential form for the field $\psi$ if and only if this field is non-monogenic \cite{Jaramillo2024}.

\begin{theorem}
Using expressions (\ref{eq:1-form}) and (\ref{eq:2-form}) together with Definitions 3.6 and 3.10, the exterior derivative of $\psi$ takes the form
\begin{multline}
d\psi  = \textit{\textbf{D}}_{\alpha\dot{\beta}}\psi\wedge x^{\mu}dx^{\mu}\wedge dx^{\nu}+\textit{\textbf{D}}_{\alpha\dot{\beta}}\psi\wedge x^{\nu}dx^{\nu}\wedge dx^{\mu}+
\gamma^{\mu}_{\alpha\dot{\beta}}\partial_{\mu}\psi\wedge x^{\nu}dx^{\mu}\wedge dx^{\nu}\\+\gamma^{\nu}_{\alpha\dot{\beta}}\partial_{\nu}\psi\wedge x^{\nu}dx^{\mu}\wedge dx^{\nu}+ \gamma^{\mu}_{\alpha\dot{\beta}}\gamma^{\nu}_{\alpha\dot{\beta}}\gamma^{\mu}_{\alpha\dot{\beta}}\partial_{\nu}\psi\wedge x^{\mu}dx^{\nu}\wedge dx^{\mu}+\gamma^{\nu}_{\alpha\dot{\beta}}\gamma^{\mu}_{\alpha\dot{\beta}}\gamma^{\nu}_{\alpha\dot{\beta}}\partial_{\mu}\psi\wedge x^{\nu}dx^{\mu}\wedge dx^{\nu}.
\end{multline}
\end{theorem}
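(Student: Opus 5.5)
The plan mirrors the proof of Proposition~\ref{prop:Tensor-Dirac}: take the ansatz $d\psi=\textit{\textbf{D}}_{\alpha\dot{\beta}}\psi\wedge dx^{\alpha\dot{\beta}}$, with the operator from \eqref{eq:dirac-spinor} and the spinor 2-form from \eqref{eq:2-form}. Then expand the product term by term, treating $\psi$ as a 0-form. The coefficients $\gamma^{\mu}_{\alpha\dot{\beta}}$, $\gamma^{\nu}_{\alpha\dot{\beta}}$ and $x^{\mu},x^{\nu}$ commute with the wedge, but $\gamma$-factors must keep their left-to-right order.

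\textbf{Step 1.} Pair $\textit{\textbf{D}}_{\alpha\dot{\beta}}\psi$, used as a whole, with the first two terms of \eqref{eq:2-form}. This immediately gives the first two terms of the claimed formula:
\[
\textit{\textbf{D}}_{\alpha\dot{\beta}}\psi\wedge x^{\mu}dx^{\mu}\wedge dx^{\nu}+\textit{\textbf{D}}_{\alpha\dot{\beta}}\psi\wedge x^{\nu}dx^{\nu}\wedge dx^{\mu}.
\]

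\textbf{Step 2.} For the last two terms of \eqref{eq:2-form}, which carry the products $\gamma^{\nu}_{\alpha\dot{\beta}}\gamma^{\mu}_{\alpha\dot{\beta}}$ and $\gamma^{\mu}_{\alpha\dot{\beta}}\gamma^{\nu}_{\alpha\dot{\beta}}$, split $\textit{\textbf{D}}_{\alpha\dot{\beta}}=\gamma^{\mu}_{\alpha\dot{\beta}}\partial_{\nu}+\gamma^{\nu}_{\alpha\dot{\beta}}\partial_{\mu}$. Then multiply each piece on the left of the $\gamma$-products. This produces four terms, each a cubic word in the $\gamma$'s.

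\textbf{Step 3.} Reduce these words using $(\gamma^{\mu}_{\alpha\dot{\beta}})^2=(\gamma^{\nu}_{\alpha\dot{\beta}})^2=\mathbb{I}_4$, the property already used in Proposition~\ref{prop:spinor-forms}.
\begin{itemize}
\item $\gamma^{\mu}\gamma^{\mu}\gamma^{\nu}$ collapses to $\gamma^{\nu}$. This is meant to yield the term written $\gamma^{\nu}_{\alpha\dot{\beta}}\partial_{\nu}\psi\wedge x^{\nu}dx^{\mu}\wedge dx^{\nu}$.
\item $\gamma^{\nu}\gamma^{\nu}\gamma^{\mu}$ collapses to $\gamma^{\mu}$. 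This is meant to yield the term written $\gamma^{\mu}_{\alpha\dot{\beta}}\partial_{\mu}\psi\wedge x^{\nu}dx^{\mu}\wedge dx^{\nu}$.
\item The alternating words $\gamma^{\mu}\gamma^{\nu}\gamma^{\mu}$ and $\gamma^{\nu}\gamma^{\mu}\gamma^{\nu}$ do not reduce. They remain as the last two terms of the theorem, with $\partial_{\nu}\psi$ and $\partial_{\mu}\psi$ respectively.
\end{itemize}
After that, reorder the remaining differentials using $dx^{\mu}\wedge dx^{\mu}=0$ and antisymmetry of the wedge, and collect terms.

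\textbf{Main obstacle.} Steps 2--3 are underdetermined, and this is where the derivation can fail. The indices $\mu,\nu$ in \eqref{eq:1-form}, \eqref{eq:2-form} and \eqref{eq:dirac-spinor} are fixed labels rather than summed indices. Moreover, the trailing $\wedge dx^{\mu}$ in the last terms of \eqref{eq:2-form} combine with the existing $dx^{\nu}$ to give exactly the differentials needed. Matching the exact derivative labels ($\partial_{\mu}$ versus $\partial_{\nu}$) and the $x$-prefactors in the theorem requires a consistent bookkeeping convention. I would fix this by matching each term with the target formula, at the cost of some sign freedom. I would check consistency against Proposition~\ref{prop:Tensor-Dirac}, in the sense that replacing $\textit{\textbf{D}}$ by $D$ should recover \eqref{eq:2-form-diff}.

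The "if and only if non-monogenic" clause would be handled separately. If $\textit{\textbf{D}}_{\alpha\dot{\beta}}\psi=0$, the first two terms vanish. Then the residual terms are what remain, so nontriviality of the full expression needs non-monogenicity in the sense of \cite{Jaramillo2024}.
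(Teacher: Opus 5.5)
Your plan is the paper's proof step for step. Both arguments take $d\psi=\textit{\textbf{D}}_{\alpha\dot{\beta}}\psi\wedge dx^{\alpha\dot{\beta}}$ by analogy with Proposition~\ref{prop:Tensor-Dirac}. Both keep the whole operator against the first two summands of \eqref{eq:2-form}, split it into four cubic words against the last two, and collapse these words with $(\gamma^{\mu}_{\alpha\dot{\beta}})^2=(\gamma^{\nu}_{\alpha\dot{\beta}})^2=\mathbb{I}_4$.

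The gap is the step you defer to ``matching each term with the target formula''. This is not a bookkeeping choice; it is where the derivation fails to produce the stated formula. Suppress the subscript $\alpha\dot{\beta}$ and keep $\mu,\nu$ as fixed labels:
\begin{itemize}
\item $\gamma^{\mu}\partial_{\nu}$ against the third summand $x^{\mu}dx^{\nu}\gamma^{\nu}\gamma^{\mu}\wedge dx^{\mu}$ gives the stated fifth term.
\item The two pieces of $\textit{\textbf{D}}_{\alpha\dot{\beta}}$ against the fourth summand $x^{\nu}dx^{\mu}\gamma^{\mu}\gamma^{\nu}\wedge dx^{\nu}$ give the stated fourth and sixth terms.
\item The remaining product, $\gamma^{\nu}\partial_{\mu}$ against the third summand, gives
\[
\gamma^{\nu}\gamma^{\nu}\gamma^{\mu}\partial_{\mu}\psi\wedge x^{\mu}dx^{\nu}\wedge dx^{\mu}=-\gamma^{\mu}\partial_{\mu}\psi\wedge x^{\mu}dx^{\mu}\wedge dx^{\nu},
\]
whereas the theorem has $+\gamma^{\mu}\partial_{\mu}\psi\wedge x^{\nu}dx^{\mu}\wedge dx^{\nu}$.
\end{itemize}
Reordering wedges can change the sign but never turns the coefficient $x^{\mu}$ into $x^{\nu}$. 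No other bookkeeping rescues it either. In this scheme $\gamma^{\mu}\partial_{\mu}\psi$ can only arise from $\gamma^{\nu}\partial_{\mu}$ hitting the summand that carries $\gamma^{\nu}\gamma^{\mu}$. That is the same summand that produces the stated fifth term, so both terms must carry the same form factor $x^{\mu}dx^{\nu}\wedge dx^{\mu}$, yet the theorem gives them different ones. Hence two claims in your proposal are false for this term: your Step~3 claim that the word $\gamma^{\nu}\gamma^{\nu}\gamma^{\mu}$ ``yields'' the third term, and your remark that the trailing $\wedge dx^{\mu}$ gives ``exactly the differentials needed''.

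The paper's proof has the identical defect. Its intermediate display silently writes this product as $\gamma^{\nu}\gamma^{\nu}\gamma^{\mu}\partial_{\mu}\psi\wedge x^{\nu}dx^{\mu}\wedge dx^{\nu}$. In its first expansion the same term even ends in $\wedge dx^{\nu}$, which would make it vanish. Following the paper therefore reaches the statement only through that relabelling. What your argument actually establishes is the displayed formula with its third term replaced by $-\gamma^{\mu}_{\alpha\dot{\beta}}\partial_{\mu}\psi\wedge x^{\mu}dx^{\mu}\wedge dx^{\nu}$.

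Separately, drop your closing paragraph on non-monogenicity. It is not part of the statement, and as argued it is a non sequitur: $\textit{\textbf{D}}_{\alpha\dot{\beta}}\psi=0$ removes only the first two terms and says nothing about whether the remaining four vanish.
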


\begin{proof}
First, we compute $\textit{\textbf{D}}_{\alpha\dot{\beta}}\psi \wedge dx^{\alpha\dot{\beta}}$
\[
\begin{aligned}
&\bigl(\gamma^{\mu}_{\alpha \dot{\beta}} \, \partial_{\nu}
 + \gamma^{\nu}_{\alpha \dot{\beta}} \, \partial_{\mu}\bigr)
 \wedge
 \bigl(x^{\mu} d x^{\mu} \wedge d x^{\nu}
 + x^{\nu} d x^{\nu} \wedge d x^{\mu}
 + x^{\mu} d x^{\nu} \, \gamma^{\nu}_{\alpha \dot{\beta}}\gamma^{\mu}_{\alpha\dot{\beta}} \wedge d x^{\mu}
 + x^{\nu}dx^{\mu}\gamma^{\mu}_{\alpha\dot{\beta}}\gamma^{\nu}_{\alpha\dot{\beta}}\wedge dx^{\nu}\bigr) \\[1.2em]
&= \,
\gamma^{\mu}_{\alpha \dot{\beta}} \, \partial_{\nu} \psi
  \wedge x^{\mu} d x^{\mu} \wedge d x^{\nu}
+ \gamma^{\mu}_{\alpha \dot{\beta}} \, \partial_{\nu} \psi
  \wedge x^{\nu} d x^{\nu} \wedge d x^{\mu} \\[0.4em]
&\quad
+ \gamma^{\mu}_{\alpha \dot{\beta}} \, \partial_{\nu} \psi
  \wedge x^{\mu} d x^{\nu}\,\gamma^{\nu}_{\alpha\dot{\beta}}\gamma^{\mu}_{\alpha\dot{\beta}} \wedge d x^{\mu}
+ \gamma^{\mu}_{\alpha\dot{\beta}}\partial_{\nu}\psi\wedge x^{\nu}dx^{\mu}\gamma^{\mu}_{\alpha\dot{\beta}}\gamma^{\nu}_{\alpha\dot{\beta}}\wedge dx^{\nu} \\[1.2em]
&\quad
+ \gamma^{\nu}_{\alpha \dot{\beta}} \, \partial_{\mu} \psi
  \wedge x^{\mu} d x^{\mu} \wedge d x^{\nu}
+ \gamma^{\nu}_{\alpha \dot{\beta}} \, \partial_{\mu} \psi
  \wedge x^{\nu} d x^{\nu} \wedge d x^{\mu} \\[0.4em]
&\quad
+ \gamma^{\nu}_{\alpha \dot{\beta}} \, \partial_{\mu} \psi
  \wedge x^{\mu} d x^{\nu}\,\gamma^{\nu}_{\alpha\dot{\beta}}\gamma^{\mu}_{\alpha\dot{\beta}} \wedge d x^{\nu}
+ \gamma^{\nu}_{\alpha\dot{\beta}}\partial_{\mu}\psi\wedge x^{\nu}dx^{\mu}\gamma^{\mu}_{\alpha\dot{\beta}}\gamma^{\nu}_{\alpha\dot{\beta}}\wedge dx^{\nu},
\end{aligned}
\]
therefore
\[
\begin{aligned}
&\bigl(\gamma^{\mu}_{\alpha \dot{\beta}} \, \partial_{\nu} \psi
+
\gamma^{\nu}_{\alpha \dot{\beta}} \, \partial_{\mu} \psi\bigr)
\wedge
x^{\mu} d x^{\mu} \wedge d x^{\nu}
+\bigl(\gamma^{\mu}_{\alpha \dot{\beta}} \, \partial_{\nu} \psi
+
\gamma^{\nu}_{\alpha \dot{\beta}} \, \partial_{\mu} \psi\bigr)
\wedge
x^{\nu} d x^{\nu} \wedge d x^{\mu} \\[1.2em]
&\quad+
\bigl(\gamma^{\mu}_{\alpha \dot{\beta}} \, \partial_{\nu} \psi
+
\gamma^{\nu}_{\alpha \dot{\beta}} \, \partial_{\mu} \psi\bigr)
\wedge
x^{\mu} d x^{\nu}\,
\gamma^{\nu}_{\alpha \dot{\beta}} \gamma^{\mu}_{\alpha\dot{\beta}}
\wedge d x^{\mu}
+\bigl(\gamma^{\mu}_{\alpha \dot{\beta}} \, \partial_{\nu} \psi
+
\gamma^{\nu}_{\alpha \dot{\beta}} \, \partial_{\mu} \psi\bigr)
\wedge
x^{\nu} d x^{\mu} \,
\gamma^{\mu}_{\alpha \dot{\beta}} \gamma^{\nu}_{\alpha \dot{\beta}}
\wedge d x^{\nu},
\end{aligned}
\]
thus
\begin{align*}
\textit{\textbf{D}}_{\alpha \dot{\beta}} \psi \wedge x^{\mu} d x^{\mu} \wedge d x^{\nu}
+ \textit{\textbf{D}}_{\alpha \dot{\beta}} \psi \wedge x^{\nu} d x^{\nu} \wedge d x^{\mu} \\[0.6em]
+ \gamma^{\mu}_{\alpha \dot{\beta}} \gamma^{\nu}_{\alpha \dot{\beta}} \gamma^{\mu}_{\alpha \dot{\beta}}
\, \partial_{\nu} \psi \wedge x^{\mu} d x^{\nu} \wedge d x^{\mu}+ \gamma^{\nu}_{\alpha \dot{\beta}} \gamma^{\nu}_{\alpha \dot{\beta}} \gamma^{\mu}_{\alpha \dot{\beta}}
\, \partial_{\mu} \psi \wedge x^{\nu} d x^{\mu} \wedge d x^{\nu}\\[0.6em]
+ \gamma^{\mu}_{\alpha \dot{\beta}} \gamma^{\mu}_{\alpha \dot{\beta}} \gamma^{\nu}_{\alpha \dot{\beta}}
\, \partial_{\nu} \psi \wedge x^{\nu} d x^{\mu} \wedge d x^{\nu}+ \gamma^{\nu}_{\alpha \dot{\beta}} \gamma^{\mu}_{\alpha \dot{\beta}} \gamma^{\nu}_{\alpha \dot{\beta}}
\, \partial_{\mu} \psi \wedge x^{\nu} d x^{\mu} \wedge d x^{\nu}.
\end{align*}

Taking into account the expression (\ref{eq:dirac-spinor}) we get

\begin{align*}
\textit{\textbf{D}}_{\alpha\dot{\beta}}\psi\wedge x^{\mu}dx^{\mu}\wedge dx^{\nu}+\textit{\textbf{D}}_{\alpha\dot{\beta}}\psi\wedge x^{\nu}dx^{\nu}\wedge dx^{\mu} \,
&+ \gamma^{\mu}_{\alpha \dot{\beta}} \gamma^{\nu}_{\alpha \dot{\beta}}
   \gamma^{\mu}_{\alpha \dot{\beta}} \, \partial_{\nu} \psi
   \wedge x^{\mu} d x^{\nu} \wedge d x^{\mu}\\
+ \gamma^{\nu}_{\alpha \dot{\beta}} \gamma^{\nu}_{\alpha \dot{\beta}}
   \gamma^{\mu}_{\alpha \dot{\beta}} \, \partial_{\mu} \psi
   \wedge x^{\nu} d x^{\mu} \wedge d x^{\nu}
& + \gamma^{\mu}_{\alpha \dot{\beta}} \gamma^{\mu}_{\alpha \dot{\beta}}
   \gamma^{\nu}_{\alpha \dot{\beta}} \, \partial_{\nu} \psi
   \wedge x^{\nu} d x^{\mu} \wedge d x^{\nu}\\
& + \gamma^{\nu}_{\alpha \dot{\beta}} \gamma^{\mu}_{\alpha \dot{\beta}}
   \gamma^{\nu}_{\alpha \dot{\beta}} \, \partial_{\mu} \psi
   \wedge x^{\nu} d x^{\mu} \wedge d x^{\nu},
\end{align*}

finally

\begin{align*}
\textit{\textbf{D}}_{\alpha\dot{\beta}}\psi\wedge x^{\mu}dx^{\mu}\wedge dx^{\nu}+\textit{\textbf{D}}_{\alpha\dot{\beta}}\psi\wedge x^{\nu}dx^{\nu}\wedge dx^{\mu} +\gamma^{\mu}_{\alpha\dot{\beta}}\partial_{\mu}\psi\wedge x^{\nu}dx^{\mu}\wedge dx^{\nu}+\\
\gamma^{\nu}_{\alpha\dot{\beta}}\partial_{\nu}\psi\wedge x^{\nu}dx^{\mu}\wedge dx^{\nu}+\gamma^{\mu}_{\alpha\dot{\beta}}\gamma^{\nu}_{\alpha\dot{\beta}}\gamma^{\mu}_{\alpha\dot{\beta}}\partial_{\nu}\psi\wedge x^{\mu}dx^{\nu}\wedge dx^{\mu}+\\
\gamma^{\nu}_{\alpha\dot{\beta}}\gamma^{\mu}_{\alpha\dot{\beta}}\gamma^{\nu}_{\alpha\dot{\beta}}\partial_{\mu}\psi\wedge x^{\nu}dx^{\mu}\wedge dx^{\nu}.
\end{align*}
\end{proof}

\end{document}